\documentclass[journal]{IEEEtran}
\newcommand{\bx}{{\bf x}}
\newcommand{\bd}{{\bf d}}
\newcommand{\bbr}{{\bf r}}
\newcommand{\by}{{\bf y}}
\newcommand{\bby}{{\bf Y}}

\newcommand{\bbx}{{\bf X}}

 \usepackage{amsmath,amssymb,amsthm,color,graphicx,bm,mathtools}
\usepackage{algorithm,algorithmic}
\usepackage{cases}
\usepackage{caption}
\usepackage{subcaption}
\newcommand{\overbar}[1]{\mkern 1.5mu\overline{\mkern-1.5mu#1\mkern-1.5mu}\mkern 1.5mu}
\newcommand*\midpoint[1]{\overline{#1}}
\newtheorem{theorem}{Theorem}
\newtheorem{proposition}[theorem]{Proposition}
\begin{document}
\title{On The 2D Phase Retrieval Problem}

\author{\IEEEauthorblockN{Dani Kogan\IEEEauthorrefmark{1}, Yonina~C.~Eldar,~\IEEEmembership{Fellow,~IEEE}\IEEEauthorrefmark{2}, and Dan Oron\IEEEauthorrefmark{1}}
	\thanks{\IEEEauthorrefmark{1}D. Kogan and D. Oron are with the Department of Physics of Complex Systems,
		Weizmann Institute of Science, Rehovot 76100, Israel  ; e-mail: {\tt danikogan10@gmail.com}. \IEEEauthorrefmark{2}Y. C. Eldar is with the Department of Electrical Engineering, Technion--Israel Institute of Technology, Haifa 32000, Israel; e-mail: {\tt yonina@ee.technion.ac.il}.
		D. Oron acknowledges support from the Israeli Centers of Research Excellence programme and the Crown photonics center. Y. C. Eldar was supported in part by the European
		Union’s Horizon 2020 Research and Innovation Program through the
		ERCBNYQ Project, and in part by the Israel Science Foundation under Grant
		335/14.}}

\maketitle

\begin{abstract}
The recovery of a signal from the magnitude of its Fourier transform,
also known as phase retrieval, is of fundamental importance in many
scientific fields. It is well known that due to the loss of Fourier phase the problem in 1D is ill-posed. Without further constraints, there  is no unique solution to the problem. In contrast, uniqueness up to trivial ambiguities very often exists in higher dimensions, with mild constraints on the input. In this paper we focus on the 2D phase retrieval problem and provide insight into this uniqueness property by exploring the connection between the 2D and 1D formulations. In particular, we show that 2D phase retrieval can be cast as a 1D problem with additional constraints, which limit the solution space. We then prove that only one additional constraint is sufficient to reduce the many feasible solutions in the 1D setting to a unique solution for almost all signals. These results allow to obtain an analytical approach (with combinatorial complexity) to solve the 2D phase retrieval problem when it is unique.
\par
\end{abstract}

\begin{IEEEkeywords}
Phase retrieval, 2D autocorrelation, uniqueness.
\end{IEEEkeywords}

\section{Introduction}
Recovery of a signal from the modulus of its Fourier transform, also known as phase retrieval    \cite{patterson1934fourier}, \cite{patterson1944ambiguities}, is of paramount importance in many scientific fields such as optics \cite{walther1963question}, X-ray crystallography \cite{millane1990phase}, astronomy  \cite{fienup1987phase}, computational biology \cite{Stefik1978DNA}, speech recognition \cite{rabiner1993fundamentals} and more \cite{JEB16,shechtman2015phase}. Generally, the phase retrieval problem has no unique solution since any choice of the Fourier phase will generate a valid solution which can be far from the original signal \cite{hayes1982reconstruction}, \cite{bates1982fourier}. If the unknown input is compactly supported and the frequency domain is oversampled by a factor of two, then the phase retrieval problem can be equivalently stated as that of retrieving a vector from its autocorrelation function.

To study uniqueness in the 1D discrete case, Bruck and Sodin \cite{bruck1979ambiguity} considered the z-transform of the autocorrelation sequence. The fact that 1D polynomials can always be factored, leads to the conclusion that in general there is no uniqueness in 1D phase retrieval.  This is true even when we ignore trivial ambiguities
which include a global phase shift, conjugate inversion and spatial shift. Furthermore, there are generally as many as  $2^{N-1}$  vectors of length $N$ that share the same autocorrelation sequence \cite{H64}. An exception is when prior knowledge on the signal is available. One example is when the signal is minimum phase so that all the zeros of its z-transform are known to lie within the unit circle
\cite{HES06}.
For higher dimensions, and especially in 2D, Bruck and Sodin argued that since multivariate polynomials cannot in general be factored, phase retrieval has a unique solution for almost all signals.

  Despite the uniqueness guarantees in 2D, there is no general solution method available to find the unknown signal from its Fourier magnitude \cite{convex}.
      Over the years, several approaches have been suggested for solving the phase retrieval problem algorithmically.
       The most popular techniques are based on alternating projections, pioneered by Gerchberg and Saxton \cite{gerchberg1972practical} and extended by Fienup  \cite{fienup1978reconstruction,streibl1984phase,fienup1982phase}.  More recent approaches include semi-definite programming (SDP) algorithms \cite{shechtman2011sparsity,candes2015phase,waldspurger2015phase}, gradient techniques with appropriate initialization such as Wirtinger flow \cite{C2015}, and greedy methods with a sparsity prior \cite{BE12,shechtman2014gespar}.

In this work we study the 2D discrete phase retrieval problem with the goal of obtaining further insight into its uniqueness properties, and understanding the intrinsic differences and relationships between the 1D and 2D formulations. In particular, it is natural to attempt to describe the 2D autocorrelation matrix $\mathbf{R}$ of an $N \times N$ matrix $\bbx$ in terms of the 1D autocorrelation $\bbr$ of its vectorized version $\bx$ with length $N^2-1$. We will see that we can compute $\bbr$ from $\mathbf{R}$ but the reverse computation is not possible. Namely, the 2D autocorrelation $\mathbf{R}$ provides more knowledge on $\bbx$ than the 1D autocorrelation of its vectorized version. Nonetheless, we can formulate the 2D phase retrieval problem in terms of the 1D autocorrelation $\bbr$ and $(N-1)^2$ additional constraints which correspond to knowledge of certain autocorrelation values of $\bbx$. Next we show that to ensure uniqueness for almost all signals, it is sufficient to remove most of these constraints, and add a {\em single} constraint beyond knowing the autocorrelation sequence $\bbr$. More specifically, for almost all matrices $\bbx$, knowing the 1D autocorrelation $\bbr$ of its vectorized version together with the additional constraint $R(N-1,-(N-1))=x[N-1]x[N^2-N]$ is sufficient to reduce the many non-trivial solutions of the  1D problem and guarantee uniqueness. Finally, we illustrate how to use these results to construct the unique recovery when it exists.

This paper is organized as follows. In Section \ref{sec:pr}, we mathematically set up the phase retrieval problem and discuss uniqueness in one and two  dimensions. We then show how 2D phase retrieval can be recast as a 1D phase retrieval problem with additional constraints in Section~\ref{sec:reform}. In Section \ref{sec:unique} we prove that only one additional constraint from the reformulated problem is sufficient to guarantee uniqueness for almost all signals. We provide several examples to demonstrate the results.

\section{Phase Retrieval in One and Two Dimensions}
\label{sec:pr}
We begin by formulating the 1D and 2D phase retrieval problems, and discussing their basic uniqueness properties.

\subsection{1D Phase Retrieval}
\label{sec:pr1}

The 1D phase retrieval problem is to recover a vector $\bx$ from the magnitude of its Fourier transform. To set up the problem mathematically,
let $\mathbf{x}\in \mathbb{R}^N$ be a real length-$N$ vector. Its discrete Fourier transform (DFT)  of length $M \geq N$ is defined as
\begin{equation}
\widehat{x}[k]=\sum_{n=0}^{N-1} x[n] e^{-j \frac{2 \pi k}{M}},\quad 0 \leq k \leq M-1.
\end{equation}
 Since the DFT is a linear transformation it can be represented by a matrix-vector multiplication  $\mathbf{F}_{M,N}\mathbf{x}$, where $\mathbf{F}_{M,N}$ is the $M \times N$ matrix consisting of the first $N$ columns of the $M$-point DFT matrix
\begin{align}
\mathbf{F}_{M,N}& =\left(\begin{array}{ccccccc}
 1 & 1 & 1 &\dots& 1 \\
 1 & \phi & \phi^2 &\dots & \phi^{N-1} \\
\vdots & \vdots & \vdots &\vdots  & \vdots \\
 1 &\phi^{M-1} &\phi^{2(M-1)} & \dots& \phi^{(N-1)(M-1)} \\
\end{array}\right),
\end{align} and $\phi=e^{-j2\pi/M}$.
With these definitions we can state the 1D phase retrieval problem as
\begin{align}& \text{find}   \hspace{2cm} \mathbf{x}\nonumber \\
& \text{subject to } \hspace{1cm} \mathbf{y}=|\mathbf{F}_{M,N} \mathbf{x}|^2
\label{eq:algo_01}
\end{align}
 where $\by$ are the given measurements and the absolute value operation is taken element-wise.

Assuming $M \geq 2N-1$, phase retrieval can equivalently be formulated as the problem of reconstructing a signal from its autocorrelation measurements
\begin{equation}
r[\ell]=\sum_{i=0}^{N-1-\ell}x[i]x[i+\ell],\quad 0 \leq \ell \leq N-1,
\end{equation}
with $r[\ell]=r[-\ell]$.
For shorthand notation, we denote by $\bbr=\mathbf{x}\star \mathbf{x}$ the autocorrelation operator.
We can then write problem (\ref{eq:algo_01}) as
\begin{align}& \text{find} \hspace{2cm} \mathbf{x}\nonumber \\
& \text{subject to}  \hspace{1.2cm} \mathbf{r}=\mathbf{x}\star \mathbf{x}.
\label{eq:algo_02}
\end{align}

Due to the loss of phase, problem (\ref{eq:algo_01}) is ill posed and generally there are many possible solutions.
First, there are trivial ambiguities that result from the fact that the following three operations conserve the Fourier magnitude:
\begin{itemize}
\item [1.] Global phase shift $x[n]\rightarrow x[n]\cdot e^{ח \phi_0}$
\item [2.] Conjugate inversion $x[n]\rightarrow \overbar{x[-n]}$
\item [3.] Spatial shift $x[n]\rightarrow x[n+n_0]$.
\end{itemize}
Even up to trivial ambiguities, the 1D problem is not unique. In particular, any two sequences of length $N$ that lead to the same autocorrelation sequence will have the same Fourier magnitude.
In \cite{beinert2015ambiguities} the authors show that there are as many as $2^{N-2}$ signals that share the same autocorrelation and are not trivially related.
To see this, we analyze the roots of $\widehat{r}[k]$, the DFT of $r[\ell]$.

Let $P_\mathbf{r}(z)$ be the \emph{associated polynomial} to $\widehat{r}[k]$:
\begin{equation}
P_\mathbf{r}(z)\coloneqq \sum_{n=0}^{2N-2} r[n-N+1]z^n\label{eq:as_1}.
\end{equation}
Since $r[\ell]$ is symmetric, the zeros of $P_\mathbf{r}(z)$ appear in reflected pairs $(\gamma_i,\overbar{\gamma}_i^{-1})$ with respect to the unit circle and there are $N-1$ such zero pairs. The associated polynomial can therefore be factored as
\begin{equation}
P_\bbr(z)=r[N-1]\prod_{i=1}^{N-1}\left(z-\gamma_i\right)\left(z-\overbar{\gamma}_i^{-1}\right).
\label{eq:as_2}
\end{equation}
On the unit circle we have
\begin{equation}
\widehat{r}[k]=r[N-1]\prod_{i=1}^{N-1}\left(e^{-j2\pi k/M}-\gamma_i\right)\left(e^{-j2\pi k/M}-\overbar{\gamma}_i^{-1}\right).
\end{equation}
Recalling that $\widehat{r}[k]=|\widehat{x}[k]|^2$, it follows that given $\widehat{r}[k]$, we cannot determine whether a zero $\gamma_i$ or its conjugate reciprocal $\bar{\gamma}_i^{-1}$ is a root of $\widehat{x}[k]$ which leads to the non-uniqueness in phase retrieval.

 More specifically, each vector $\mathbf{x}\in \mathbb{C}^N$ with autocorrelation sequence $\mathbf{r}$ has an associated polynomial of the form (up to a phase factor):
 \begin{align}
P_{\mathbf{x}}(z)& = \sum_{i=0}^{N-1}x[i]z^{i}\nonumber \\
 &= \left[r[N-1]\prod_{j=1}^{N-1}\left|\beta_{j}\right|^{-1}\right]^{\frac{1}{2}}\prod_{j=1}^{N-1}(z-\beta_{j})\label{eq:as_3}
\end{align} where the parameters $\beta_{j}$ in (\ref{eq:as_3}) can be chosen
as $\beta_{j}\in\left\{ \gamma_{j},\bar{\gamma}_{j}^{-1}\right\} $. Aside from trivial ambiguities each solution is characterized by a different \emph{zero set} $B\coloneqq \left\{ \beta_1,...,\beta_{N-1}\right\}$. Obviously as $\beta_j$ is allowed to either be $\gamma_j$ or $\overbar{\gamma}^{-1}_j$ we can construct up to $2^{N-1}$ different  zero sets. It can be shown (see \cite[corollary 3.3]{beinert2015ambiguities})  that if a signal $\mathbf{x}$ of the form (\ref{eq:as_3}) has a zero set $\left\{ \beta_1,...,\beta_{N-1}\right\}$ then the conjugate reflected signal of $\mathbf{x}$ has a zero set $\left\{ \overbar{\beta}^{-1}_1,...,\overbar{\beta}^{-1}_{N-1}\right\}$. Thus we conclude that there are up to $2^{N-2}$  non trivially different vectors $\mathbf{x}$ with the same autocorrelation $\mathbf{r}$.

\subsection{2D Phase Retrieval}

We now turn to discuss the 2D phase retrieval problem.
In this case our unknown is an $N\times N$ real matrix $\mathbf{X}$. We assume we are given the magnitude-square of the 2D Fourier transform
$\bby=|\mathbf{F}_{M,N} \bbx \mathbf{F}_{M,N}^T|^2$.
Our problem can then be written as
\begin{align}& \text{find}   \hspace{1.5cm} \mathbf{X}\nonumber \\
& \text{subject to } \hspace{.5cm} \textrm{vec}\left(\mathbf{Y}\right)=|\left(\mathbf{F}_{M,N}\otimes\mathbf{F}_{M,N}\right) \textrm{vec}\left(\mathbf{X}\right)|^2,
\label{eq:algo_03}
\end{align}
where $\otimes$ denotes the Kronecker product and vec is the vectorization of a matrix. Assuming once again that
$M \geq 2N-1$, we can reformulate this problem in terms of the 2D autocorrelation of the matrix $\mathbf{X}$, where the 2D autocorrelation is defined by
\begin{eqnarray}
\label{eq:R2}
R(i,j)& = & \sum_{m=0}^{N-1-|i|}\sum_{n=0}^{N-1-|j|}X(m,n)X(m+i,n+j)
,\nonumber \\ &&\ \hspace*{1.2in} 0 \leq i,j \leq N-1, \nonumber \\
R(i,j)& = & \sum_{m=0}^{N-1-i}\sum_{n=-j}^{N-1}X(m,n)X(m+i,n+j)
,\nonumber \\ && \hspace*{.7in} i > 0, -(N-1) \leq j \leq -1,
\end{eqnarray}
and $R(-i,-j)=R(i,j)$. As in the 1D case, we use the notation $\mathbf{R}=\mathbf{X}\star \mathbf{X}$ to describe (\ref{eq:R2}). We can then write (\ref{eq:algo_03}) as
\begin{align}& \text{find}   \hspace{2cm} \mathbf{X}\nonumber \\
& \text{subject to } \hspace{1cm} \mathbf{R}=\mathbf{X}\star \mathbf{X}.
\label{eq:algo_04}
\end{align}

Similarly to the 1D setting, the ambiguities of the 2D phase retrieval problem depend on the factorization of a multivariate algebraic polynomial
into irreducible polynomials \cite{hayes1982reconstruction}. Since almost all 2D polynomials are irreducible, namely the reducible polynomials form a set of measure zero in the space of all polynomials \cite{hayes1982red}, almost every 2D signal can be uniquely recovered from
its 2D Fourier magnitude. Therefore, the uniqueness properties in 1D and 2D Fourier phase retrieval are markedly different.
Our goal in this paper is to obtain further insight into this difference from the point of view of the autocorrelation sequences rather than the traditional algebraic viewpoint.

\section{Reformulating the 2D Problem}
\label{sec:reform}

In order to understand the essential differences between the 2D and 1D phase retrieval problems, it is intuitive to try and write the 2D problem in terms of a vector $\bx$ representing the matrix $\bbx$. A natural approach is to vectorize $\bbx$ and then attempt to express the correlations in (\ref{eq:algo_04}) in terms of the vector $\bx$.
To be explicit, we will focus on the choice $\bx=\textrm{vec}\left(\mathbf{X^T}\right)$ which vectorizes $\bbx$ row-wise.
Two immediate questions arise in this context:
\begin{enumerate}
\item Can we express the 2D autocorrelation $\mathbf{R}=\mathbf{X}\star \mathbf{X}$ in terms of the 1D autocorrelation $\mathbf{r}=\mathbf{x}\star \mathbf{x}$?
\item Is the 1D autocorrelation $\mathbf{r}$ together with a small number of constraints sufficient to obtain uniqueness for almost all matrices $\bbx$?
\end{enumerate}

In this section we focus on the first question and show that $\mathbf{R}$ contains more information than $\bbr$. In other words, knowing $\bbr$ is not enough to recover $\mathbf{R}$. Beyond $\bbr$ we will need an additional $(N-1)^2$ constraints to fully characterize $\mathbf{R}$.
The second question will be discussed in Section~\ref{sec:unique} where we show that knowledge of $\bbr$ together with only one additional constraint that involves a specific value of $\mathbf{R}$ is sufficient to ensure uniqueness for almost all 2D signals.

\subsection{From 2D to 1D}

A first step towards reformulating the 2D problem into a  1D counterpart is to vectorize the matrix $\mathbf{X}$ so we effectively retrieve the same object as one would in a 1D setting. As noted above we will do that via $\mathbf{x}$=$\textrm{vec}\left(\mathbf{X^T}\right)$ which implies that
\begin{equation}
\label{eq:1d2}
x[mN+n]=X(m,n), \quad 0\leq m,n\leq N-1.
\end{equation}
The next step is to understand the relationship between the 2D autocorrelation $\mathbf{R}$ of $\bbx$ and the 1D autocorrelation $\bbr$ of $\bx$. This relationship is expressed in the following proposition.
\begin{proposition}
\label{prop:2d1d}
Let $\mathbf{X}$ be a real valued $N\times N$ matrix and let $\mathbf{x}=\textrm{vec}\left(\mathbf{X^T}\right)$ be its row vectorized representation of length $N^2$. Let $\mathbf{R}=\mathbf{X} \star \mathbf{X}$ be the 2D autocorrelation matrix of $\bbx$ and let $\bbr=\bx \star \bx$ be the
1D autocorrelation sequence of $\bx$. Then for values $0 \leq \ell \leq N^2-1$ we have
\begin{equation}
\label{eq:rviaR}
r(\ell)=\hspace*{-0.05in}\left\{\hspace*{-0.08in}
\begin{array}{ll}
R(i,0) & \hspace*{-0.02in} \ell=iN,  \\
R(N-1,j) & \hspace*{-0.02in} N(N-1)< \ell \leq N^2-1 \\
R(i,j)+R(i+1,j-N) & \hspace*{-0.02in} \mbox{otherwise},\\
\end{array}
\right.
\end{equation}
where $0 \leq i,j \leq N-1$ are the unique values satisfying $\ell=iN+j$, namely, $i=\lfloor \ell/N \rfloor$ and $j=\ell \mbox{ mod }N$. \end{proposition}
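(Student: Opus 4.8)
The plan is to expand the one‑dimensional autocorrelation directly from its definition, $r(\ell)=\sum_{p=0}^{N^2-1-\ell} x[p]\,x[p+\ell]$ (renaming the summation index to $p$ so as not to clash with $i$), and to pass to ``base‑$N$ coordinates''. Write $p=mN+n$ with $0\le m,n\le N-1$, so that $x[p]=X(m,n)$ by (\ref{eq:1d2}), and write the lag as $\ell=iN+j$ with $0\le i,j\le N-1$ exactly as in the statement. Then $p+\ell=(m+i)N+(n+j)$, and since $0\le n+j\le 2N-2$ there are precisely two regimes: a ``no‑carry'' regime $n+j\le N-1$, where $x[p+\ell]=X(m+i,n+j)$, and a ``carry'' regime $n+j\ge N$, where $x[p+\ell]=X(m+i+1,n+j-N)$.

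First I would split the sum over $p$ into these two regimes, $r(\ell)=S_A+S_B$, and then pin down the exact index set of $(m,n)$ occurring in each. For this one uses that $p$ ranges over $0\le p\le N^2-1-\ell$, which (since $p\ge 0$ is automatic and $\ell\ge 0$) is equivalent to $p+\ell\le N^2-1$, together with the elementary fact that for $0\le b\le N-1$ one has $aN+b\le N^2-1$ iff $a\le N-1$. Applying this to $p+\ell$ in each regime, and intersecting with $0\le m,n\le N-1$ and the regime's own condition on $n+j$, shows that in $S_A$ the pair $(m,n)$ runs over the rectangle $0\le m\le N-1-i$, $0\le n\le N-1-j$, and in $S_B$ over $0\le m\le N-2-i$, $N-j\le n\le N-1$ (one also checks the converse, that every such pair indeed satisfies $p\le N^2-1-\ell$, so these rectangles are exactly the index sets, not merely supersets). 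Substituting the identifications of $x[p+\ell]$, the sum $S_A$ becomes $\sum_{m=0}^{N-1-i}\sum_{n=0}^{N-1-j}X(m,n)X(m+i,n+j)$, which is $R(i,j)$ by the first line of (\ref{eq:R2}) with $|i|=i$, $|j|=j$; and $S_B$ becomes $\sum_{m=0}^{N-2-i}\sum_{n=N-j}^{N-1}X(m,n)X(m+i+1,n+j-N)$, which is $R(i+1,\,j-N)$ by the second line of (\ref{eq:R2}), valid because $i+1>0$ and, when $j\ge 1$, $-(N-1)\le j-N\le -1$.

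It then remains only to read off the three cases according to when $S_B$ degenerates. The rectangle of $S_B$ has an empty $n$‑range precisely when $j=0$, i.e.\ $\ell=iN$, in which case $r(\ell)=S_A=R(i,0)$. Its $m$‑range is empty precisely when $i=N-1$; since the sub‑case $j=0$ is already accounted for, this corresponds to $i=N-1$ with $1\le j\le N-1$, that is $N(N-1)<\ell\le N^2-1$, and then $r(\ell)=S_A=R(N-1,j)$. In every remaining case, $0\le i\le N-2$ and $1\le j\le N-1$, both rectangles are nonempty and $r(\ell)=R(i,j)+R(i+1,j-N)$, which is (\ref{eq:rviaR}). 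Note that $S_A$ is never empty because $0\le i,j\le N-1$. I do not anticipate any real difficulty here; the one place that needs care is the combinatorial bookkeeping in the middle step — verifying that the index set of the original one‑dimensional sum, once intersected with each carry regime, is exactly the rectangular index set of the corresponding two‑dimensional autocorrelation formula, and correctly tracking the degenerate sub‑cases $i=N-1$ and $j\in\{0,N-1\}$.
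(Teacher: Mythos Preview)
Your proof is correct and is essentially the same argument as the paper's, run in the opposite direction: the paper starts from $R(i,j)+R(i+1,j-N)$ expressed via (\ref{eq:1d2}) and simplifies to $r(\ell)$, then checks the boundary cases $j=0$ and $i=N-1$ separately, whereas you start from $r(\ell)$, split by the base-$N$ carry, and let the boundary cases fall out as the degeneracies of $S_B$. The index bookkeeping and the identification of the two pieces with the two lines of (\ref{eq:R2}) are identical in both.
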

\begin{proof}
Substituting (\ref{eq:1d2}) into (\ref{eq:R2}) the 2D autocorrelation takes the form
\begin{eqnarray}
R(i,j)&=&\sum_{m=0}^{N-1-|i|}\sum_{n=0}^{N-1-|j|}x[mN+n]x[(m+i)N+n+j]\nonumber \\
&  = & \sum_{m=0}^{N-1-|i|}\sum_{n=0}^{N-1-|j|}x[mN+n]x[mN+n+\ell],
\label{eq:2d_2}
\end{eqnarray}
for $0 \leq i,j \leq N-1$ where $\ell=iN+j$. Similarly, for $i>0,-(N-1) \leq j \leq -1$ we have
\begin{equation}
\label{eq:2d_2n}
R(i,j) =  \sum_{m=0}^{N-1-|i|}\sum_{n=-j}^{N-1}x[mN+n]x[mN+n+\ell].
\end{equation}
Since the summand in $R(i,j)$ depends only on $\ell$, any valid choice of $i,j$ that leads to the same value of $\ell=iN+j$ will have the same summand. In particular, the pairs $(i,j)$ and $(i+1,j-N)$ result in the same $\ell$.
Therefore, for every $0<i \leq N-2$ and $0 < j \leq N-1$ there is a corresponding value $j'=j-N$ in the range $-(N-1) \leq j \leq -1$ with the same $\ell$. For this choice of $i,j'$  we have
\begin{equation}
R(i+1,j-N)  = \sum_{m=0}^{N-2-i}\sum_{n=N-j}^{N-1}x[mN+n]x[mN+n+\ell].
\end{equation}

Adding (\ref{eq:2d_2}) and (\ref{eq:2d_2n}) for $0<i \leq N-2,0 < j \leq N-1$ gives
\begin{eqnarray}
\lefteqn{R(i,j)+R(i+1,j-N)= } \nonumber \\  &\hspace*{-0.12in} = & \hspace*{-0.20in} \left(\sum_{m=0}^{N-2-i}\sum_{n=N-j}^{N-1}\hspace*{-0.08in} + \hspace*{-0.08in}\sum_{m=0}^{N-1-i}\sum_{n=0}^{N-1-j}\right)x[mN+n]x[mN+n+\ell]\nonumber\\
&\hspace*{-0.12in} = & \hspace*{-0.20in}  \sum_{m=0}^{N-2-i}\sum_{n=0}^{N-1}x[mN+n]x[mN+n+\ell]+\nonumber \\
&&  \hspace*{-0.12in} \sum_{n=0}^{N-1-j}\hspace*{-0.1in} x[(N-1-i)N+n]x[(N-1-i)N+n+\ell].
\label{eq:2d_4}
\end{eqnarray}
We next substitute $k=mN+n$ in the first term of (\ref{eq:2d_4}) and $s=(N-i-1)N+n$ in the second term which results in
\begin{eqnarray}
\lefteqn{R(i,j)+R(i+1,j-N)= } \nonumber \\
& = & \sum_{k=0}^{(N-i-1)N-1}x[k]x[k+\ell]+\hspace*{-0.2in} \sum_{s=(N-i-1)N}^{(N-i-1)N+N-1-j}x[s]x[s+\ell]\nonumber \\
&= & \sum_{k=0}^{N^2-1-(iN+j)}x[k]x[k+\ell]\nonumber \\
&= &\sum_{k=0}^{N^2-1-\ell}x[k]x[k+\ell].
\label{eq:2d_5}
\end{eqnarray}
The expression in (\ref{eq:2d_5}) is precisely the autocorrelation sequence of $\mathbf{x}$ for values of $\ell=iN+j$ with
  $0<i \leq N-2,0 < j \leq N-1$. This corresponds to all values $0 \leq \ell \leq N(N-1)-1$ excluding $\ell=Ni$.

It remains to determine $r(\ell)$ for $\ell=Ni$ and $N(N-1)< \ell \leq N^2-1$.
We begin with $\ell=Ni$ which corresponds to $j=0$. Substituting these values into (\ref{eq:2d_2}) results in
\begin{eqnarray}
R(i,0) & = & \sum_{m=0}^{N-1-|i|}\sum_{n=0}^{N-1}x[mN+n]x[mN+n+\ell]\nonumber \\
&  = & \sum_{k=0}^{(N-1-|i|)N+N-1}x[k]x[k+\ell] \nonumber \\
&  = & \sum_{k=0}^{N^2-1-\ell}x[k]x[k+\ell]=r[\ell],
\end{eqnarray}
where we defined $k=mN+n$. Next, the values $N(N-1)< \ell \leq N^2-1$ correspond to $i=N-1$ and $0 \leq j \leq N-1$. Plugging into (\ref{eq:2d_2}) leads to
\begin{eqnarray}
R(N-1,j) & = & \sum_{n=0}^{N-1-j}x[n]x[n+\ell]\nonumber \\
&  = & \sum_{k=0}^{N^2-1-\ell}x[k]x[k+\ell]=r[\ell],
\end{eqnarray}
since $\ell=(N-1)N+j$.
\end{proof}

Proposition~\ref{prop:2d1d} establishes that the 1D autocorrelation provides less knowledge on $\bbx$ than the 2D autocorrelation.
Indeed, knowing $\bbr$ provides information on $R(i,j)$ for $-(N-1) \leq j \leq -1$ only through the sum of these values with matching autocorrelation values.
Therefore, in order to be able to recover $\mathbf{R}$ from $\bbr$ we must obtain these values separately, namely we require an additional $(N-1)^2$ constraints on $\mathbf{R}$.
This observation combined with Proposition~\ref{prop:2d1d} leads to the following theorem, which establishes that the 2D phase retrieval problem can be cast as a 1D problem with additional constraints.
\begin{theorem}
\label{thm:2d1d}
Let $\mathbf{X}$ be a real valued $N\times N$ matrix and let $\mathbf{x}=\textrm{vec}\left(\mathbf{X^T}\right)$ be its row vectorized representation. Let $\mathbf{R}=\mathbf{X} \star \mathbf{X}$ be the 2D autocorrelation matrix of $\bbx$ and let $\bbr=\bx \star \bx$ be the
 1D autocorrelation sequence of $\bx$. Then the 2D phase retrieval problem
\begin{equation}
\begin{aligned}
&\mbox{find} \hspace{2cm} \bbx  \\
\nonumber & \mbox{subject to} \hspace{1.15cm}  \mathbf{R}=\bbx \star \bbx
\end{aligned}
\label{eq:algo_06}
\end{equation}
is equivalent to the 1D problem with $(N-1)^2$ additional constraints:
\begin{equation}
\begin{aligned}
&\mbox{find} \hspace{1cm} \bx  \\
\nonumber & \mbox{subject to} \hspace{.15cm}  \mathbf{r}=\mathbf{x} \star \mathbf{x}\\
 & \hspace{1.5cm}  R(i,j)= \sum_{m=0}^{N-1-i}\sum_{n=-j}^{N-1}\\
& \hspace{3.5cm} x[mN+n]x[mN+n+\ell]
\end{aligned}
\label{eq:algo_07}
\end{equation}
for $i>0,-(N-1) \leq j \leq -1$ where $\ell=iN+j$ and
$0\leq \ell\leq N^2-1$. The values of $r(\ell)$ are determined from $R(i,j)$ via (\ref{eq:rviaR}).
\end{theorem}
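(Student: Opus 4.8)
The plan is to exhibit a bijection between the feasible sets of the two problems, namely the reshaping map $\bbx \mapsto \bx = \textrm{vec}(\bbx^T)$, equivalently $X(m,n)=x[mN+n]$ as in (\ref{eq:1d2}). Since this map is a bijection between $\mathbb{R}^{N\times N}$ and $\mathbb{R}^{N^2}$, it suffices to check that $\bbx$ is feasible for (\ref{eq:algo_06}) if and only if $\bx$ is feasible for (\ref{eq:algo_07}).

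For the forward direction, assume $\mathbf{R}=\bbx\star\bbx$. Substituting (\ref{eq:1d2}) into the definition (\ref{eq:R2}) for $i>0$, $-(N-1)\le j\le -1$ produces exactly the expression $\sum_{m=0}^{N-1-i}\sum_{n=-j}^{N-1}x[mN+n]x[mN+n+\ell]$ appearing in (\ref{eq:algo_07}); this is equation (\ref{eq:2d_2n}) from the proof of Proposition~\ref{prop:2d1d}, so the $(N-1)^2$ extra constraints hold automatically. Proposition~\ref{prop:2d1d} then gives $\bbr=\bx\star\bx$ with each $r(\ell)$ equal to the value prescribed by (\ref{eq:rviaR}) in terms of $\mathbf{R}$. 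Hence $\bx$ is feasible for (\ref{eq:algo_07}).

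For the converse, assume $\bx$ is feasible for (\ref{eq:algo_07}) and put $X(m,n)=x[mN+n]$; I must show $(\bbx\star\bbx)(i,j)=R(i,j)$ for all $-(N-1)\le i,j\le N-1$. By the central symmetry of both $\mathbf{R}$ and $\bbx\star\bbx$ it suffices to treat the pairs with $0\le i,j\le N-1$ and the pairs with $i>0$, $-(N-1)\le j\le -1$. For the latter the identity is immediate: the right-hand side of the corresponding constraint in (\ref{eq:algo_07}) is, by (\ref{eq:R2}) and (\ref{eq:1d2}), precisely $(\bbx\star\bbx)(i,j)$. For $j=0$ (the branch $\ell=iN$) and for $i=N-1$ (the branch $N(N-1)<\ell\le N^2-1$), the displayed computations in the proof of Proposition~\ref{prop:2d1d} give $(\bbx\star\bbx)(i,j)=(\bx\star\bx)(\ell)=r(\ell)=R(i,j)$, using $\bbr=\bx\star\bx$ and the matching line of (\ref{eq:rviaR}). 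Finally, for the remaining pairs $0\le i\le N-2$, $0<j\le N-1$ (the ``otherwise'' branch), the same computation yields $(\bbx\star\bbx)(i,j)+(\bbx\star\bbx)(i+1,j-N)=(\bx\star\bx)(\ell)=r(\ell)=R(i,j)+R(i+1,j-N)$; the pair $(i+1,j-N)$ has $i+1>0$ and $-(N-1)\le j-N\le -1$, so it lies in the negative-$j$ range already handled, whence $(\bbx\star\bbx)(i+1,j-N)=R(i+1,j-N)$ and subtraction gives $(\bbx\star\bbx)(i,j)=R(i,j)$. Thus $\mathbf{R}=\bbx\star\bbx$ and $\bbx$ is feasible for (\ref{eq:algo_06}).

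The only point requiring genuine care — and where a careful reader should pause — is the index bookkeeping: verifying that the three branches of (\ref{eq:rviaR}), the $(N-1)^2$ negative-$j$ constraints, and the central symmetry together exhaust every pair $(i,j)$ with $-(N-1)\le i,j\le N-1$ without overlap, and that the shifted pair $(i+1,j-N)$ in the ``otherwise'' branch always falls within the range the explicit constraints cover. No analytic obstacle arises; the content of the theorem is essentially a repackaging of Proposition~\ref{prop:2d1d} together with the observation that the $(N-1)^2$ autocorrelation values $R(i,j)$ with $j<0$ are exactly the information about $\bbx$ that $\bbr$ alone does not determine.
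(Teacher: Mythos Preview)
Your proof is correct and follows the same approach as the paper, which simply states that the theorem is the combination of Proposition~\ref{prop:2d1d} with the observation that the $(N-1)^2$ values $R(i,j)$ for $i>0$, $-(N-1)\le j\le -1$ are exactly what must be supplied in addition to $\bbr$ to recover $\mathbf{R}$. In fact you supply considerably more detail than the paper does---in particular the explicit converse argument and the index bookkeeping---whereas the paper treats the theorem as an immediate corollary.
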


To summarize, the general procedure for reformulating a 2D problem as a 1D problem is as follows:
\begin{enumerate}
\item  Define the vector $\mathbf{x}$ that is generated by vectorizing the matrix $\mathbf{X}^T$, i.e $\mathbf{x}=\textrm{vec}\left(\mathbf{X}^T\right).$
\item  Extract the 1D autocorrelation sequence of $\mathbf{x}$ from the 2D autocorrelation matrix of $\mathbf{X}$ via Proposition~\ref{prop:2d1d}.
\item Using the extracted 1D autocorrelation, recast the 2D phase retrieval problem on $\mathbf{X}$ as a 1D phase retrieval problem on $\bx$ with additional $(N-1)^2$ constraints as stated in Theorem~\ref{thm:2d1d}.
\end{enumerate}

\subsection{An example}
We now provide an example demonstrating the steps above in order to reformulate a simple 2D problem.

Let $\mathbf{X}$ be the $2\times 2$ matrix given by
 \begin{align}
\mathbf{X} & =\left(\begin{array}{ccccccc}
  -24 & 26 \\
-9 & 1   \\
\end{array}\right).
\end{align}
In a phase retrieval experiment we measure the 2D autocorrelation of the matrix $\mathbf{X}$:
\begin{align}
\mathbf{R} & =\left(\begin{array}{ccccc}
 R(-1,-1) & R(-1,0) & R(-1,1) \\
 R(0,-1) & R(0,0) & R(0,1) \\
 R(1,-1) & R(1,0) & R(1,1) \\
\end{array}\right).
\end{align}
The autocorrelation matrix $\mathbf{R}$ has 5 unique values corresponding to the 5 constraints
\begin{align}
& X(0,0)^2+X(0,1)^2+X(1,0)^2+X(1,1)^2=R(0,0)\nonumber \\
&X(0,0)X(0,1)+X(1,0)X(1,1)=R(0,1)\nonumber\\
 &X(0,0)X(1,0)+X(0,1)X(1,1) =R(1,0)\nonumber\\
&X(0,0)X(1,1) =R(1,1)\nonumber\\
& X(0,1)X(1,0)=R(1,-1).
\end{align}
Computing these values results in
\begin{align}
\mathbf{R}
&= \left(\begin{array}{ccccc}
 -24 & 242 & -234  \\
 -633 & 1334 & -633 \\
 -234 & 242 & -24 \\
 \end{array}\right).
\end{align}
Therefore, the 2D phase retrieval problem can be stated as:
 \begin{align}
 \label{eq:ex2d}
& \mbox{find } \mathbf{X}  =\left(\begin{array}{ccccccc}
  X(0,0) & X(0,1) \\
X(1,0) & X(1,1)   \\
\end{array}\right) \nonumber \\
&  \mbox{subject to}  \nonumber \\
& \hspace*{.25in}X(0,0)^2+X(0,1)^2+X(1,0)^2+X(1,1)^2=1334 \nonumber \\
&\hspace*{.25in}X(0,0)X(0,1)+X(1,0)X(1,1)=-633\nonumber\\
 &\hspace*{.25in}X(0,0)X(1,0)+X(0,1)X(1,1) =242\nonumber\\
&\hspace*{.25in}X(0,0)X(1,1) =-24\nonumber\\
&\hspace*{.25in} X(0,1)X(1,0)=-234.
\end{align}

Next we show how to reformulate (\ref{eq:ex2d}) as a 1D problem with one additional constraint using the steps described at the end of the previous section:
\begin{enumerate}
\item   Define the vector  $\sloppy \mathbf{x}=\textrm{vec}\left(\mathbf{X}^T\right)=\left[X(0,0),X(0,1),X(1,0),X(1,1)\right]$ by vectorizing the matrix $\mathbf{X}^T$.
\item  Using Proposition~\ref{prop:2d1d} extract the 1D autocorrelation $\mathbf{r}$ of $\mathbf{x}$ which results in $\sloppy \mathbf{r}=\left[R(0,0),R(0,1)+R(1,-1),R(1,0),R(1,1)\right]=[1334,-867,242,-24]$ for the nonnegative part of $\bbr$.
\item  From Theorem~\ref{thm:2d1d} we can now recast the 2D phase retrieval problem  as
a 1D phase retrieval problem with
$(N-1)^2=1$ additional constraints:
\begin{align}& \text{\normalfont find}  ~\mathbf{x} \nonumber \\
& \text{\normalfont subject to}  \nonumber \\
& \hspace*{.25in} \bx \star \bx=[-24,242,-867,1334,-867,242,-24] \nonumber \\
 & \hspace*{.25in} x[1]x[2]=-234.
\label{eq:algo_08}
\end{align}
\end{enumerate}

\section{Ambiguity Reduction}
\label{sec:unique}
In Section \ref{sec:reform} we established how to recast 2D phase retrieval as a 1D problem with $(N-1)^2$ additional constraints. We now show that, for almost all signals, a single constraint is sufficient to reduce the many feasible solutions of the 1D formulation to one solution (up to trivial ambiguities).
In particular, we prove the following theorem.
\begin{theorem}
\label{thm:ambig}
Let $\mathbf{X}$ be a real valued $N\times N$ matrix and let $\mathbf{x}=\textrm{vec}\left(\mathbf{X}^T\right)$ be its row vectorized representation. Let $\mathbf{R}=\mathbf{X} \star \mathbf{X}$ be the 2D autocorrelation matrix of $\bbx$ and let $\bbr=\bx \star \bx$ be the
 1D autocorrelation sequence of $\bx$ defined by (\ref{eq:rviaR}).
Then the constraint $R(N-1,-(N-1))=x[N-1]x[N^2-N]$ alone is sufficient to reduce the non-trivial ambiguities of the 1D phase retrieval problem on $\mathbf{x}$. Specifically, the solution to
\begin{equation}
\begin{aligned}
&\mbox{find} \hspace{1cm} \bx  \\
\nonumber & \mbox{subject to} \hspace{.15cm}  \mathbf{r}=\mathbf{x} \star \mathbf{x}\\
 & \hspace{1.5cm}  R(N-1,-(N-1))=x[N-1]x[N^2-N]
\end{aligned}
\end{equation}
is unique for almost all vectors $\mathbf{x}$.
\end{theorem}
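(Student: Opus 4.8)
The plan is to translate the single extra constraint into the language of zero sets developed in Section~\ref{sec:pr1}, and then show that it picks out essentially one admissible factorization. Recall that every candidate solution $\bx$ is determined (up to trivial ambiguities) by a zero set $B=\{\beta_1,\dots,\beta_{N^2-1}\}$ with $\beta_j\in\{\gamma_j,\overbar{\gamma}_j^{-1}\}$, where the $\gamma_j$ are the fixed roots coming from the measured autocorrelation $\bbr$. The quantities $x[N-1]$ and $x[N^2-N]$ are, up to the common normalization factor $c\coloneqq\big[r[N^2-1]\prod_j|\beta_j|^{-1}\big]^{1/2}$, nothing but symmetric functions of the chosen roots: $x[N^2-N]$ (the coefficient of $z^{N^2-N}$ in $P_\bx$) is $c$ times the elementary symmetric polynomial $e_{N-1}(\beta_1,\dots,\beta_{N^2-1})$ up to sign, and $x[N-1]$ is $c$ times $e_{N^2-N}(\beta_1,\dots,\beta_{N^2-1})$ up to sign. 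Meanwhile $R(N-1,-(N-1))=x[N-1]\,x[N^2-N]$ is a fixed number obtained from the data (it is $X(0,N-1)X(N-1,0)$ for the true matrix). So the constraint reads: the product $x[N-1]\,x[N^2-N]$, viewed as a function of the $2^{N^2-1}$ sign choices, must equal this prescribed value.

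First I would make precise why this product is generically injective enough on the set of zero sets. Define, for each subset $S\subseteq\{1,\dots,N^2-1\}$ encoding the choice $\beta_j=\overbar{\gamma}_j^{-1}$ iff $j\in S$, the scalar $f(S)\coloneqq x[N-1]\,x[N^2-N]$. Using the product formula $\prod_j(z-\beta_j)$ and collecting the coefficient pair, $f(S)$ is a rational function of the $\gamma_j$'s and their conjugate reciprocals, and the normalization factors $|\beta_j|^{-1}$ conveniently produce a clean homogeneous expression once one writes $|\beta_j|^{-1}=(\beta_j\overbar{\beta}_j)^{-1/2}$. The key algebraic claim is that for almost all $\bx$ (equivalently, almost all choices of the $\gamma_j$), the value $f(S)$ coincides with the ``true'' value $f(S_0)$ only when $S$ gives back $\bx$ or its conjugate-reflected copy, i.e.\ $S\in\{S_0,\ S_0^{c}\text{-type}\}$ — the latter because, by \cite[corollary 3.3]{beinert2015ambiguities}, conjugate reflection replaces every $\beta_j$ by $\overbar{\beta}_j^{-1}$, which is a trivial ambiguity and also preserves $x[N-1]x[N^2-N]$ (it swaps the two factors up to the global constant).

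The mechanism for the genericity step is the standard one: for each fixed pair $(S,S')$ with $S'\notin\{S,\text{reflection of }S\}$, the equation $f(S)=f(S')$ is a nontrivial polynomial (or rational) identity in the $2(N^2-1)$ real parameters describing the $\gamma_j$, hence its solution set is either everything or a measure-zero variety; one exhibits a single parameter value where $f(S)\neq f(S')$ to rule out the former, and then takes the finite union over all $O(2^{N^2})$ pairs, which is still measure zero. I would carry out the steps in this order: (i) express $x[N-1]$ and $x[N^2-N]$ as $\pm c\,e_k(B)$ for the appropriate $k$; (ii) write the constraint as $f(S)=R(N-1,-(N-1))$; (iii) identify the two sign patterns forced by trivial ambiguity that automatically satisfy it; (iv) show that for each other pattern the equality is a proper algebraic condition on the $\gamma_j$; (v) union the exceptional sets.

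The main obstacle is step (iv): proving that $f(S)-f(S')$ is not \emph{identically} zero as a function of the roots for every non-equivalent pair $(S,S')$. This is delicate because $f$ involves the absolute-value normalization $\prod_j|\beta_j|^{-1/2}$, which is not polynomial in the $\gamma_j$, and because one must be careful that the reflection symmetry $\gamma_j\mapsto\overbar{\gamma}_j^{-1}$ of the data does not create extra coincidences beyond the one already accounted for. I would handle this by treating the $\gamma_j$ and $\overbar{\gamma}_j$ as independent indeterminates (legitimate since the real signal constraint only restricts to a real slice, on which a nonzero holomorphic/real-analytic function is still a.e.\ nonzero), so that $f$ becomes an honest rational function and $f(S)-f(S')\equiv 0$ can be checked by comparing, say, the behavior as one $\gamma_j\to 0$ or $\gamma_j\to\infty$ — a choice in $S$ versus $S'$ at index $j$ changes the order of vanishing, giving the needed nonvanishing. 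A secondary, more bookkeeping-heavy point is verifying that the prescribed right-hand side $R(N-1,-(N-1))$ is exactly $f(S_0)$ with the correct sign and constant, which follows by unwinding \eqref{eq:1d2} and \eqref{eq:R2} at the corner indices.
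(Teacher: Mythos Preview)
Your overall strategy coincides with the paper's: parametrize the $1$D ambiguities by zero sets $B$, express the product $x[N-1]\,x[N^2-N]$ through Vieta as a function $f$ of the roots, argue that for each competing choice the equation $f(S)=f(S_0)$ is a nontrivial real-analytic condition on the root data, and take the finite union of the resulting measure-zero sets. The paper also exploits the pleasant cancellation you anticipate in the normalization: writing $y^{(I)}$ for the flip, one has $y[N^2-1]\,y^{(I)}[N^2-1]=|r[N^2-1]|$, so $y[N-1]\,y[N^2-N]=|r[N^2-1]|\,e_{N-1}(\beta)\,e_{N-1}(\overbar{\beta}^{-1})$ with no absolute values left.

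The one place your sketch falls short is step (iv). The single-variable asymptotic you propose (``send one $\gamma_j\to 0$ or $\infty$ and compare orders of vanishing'') does \emph{not} separate $f(S)$ from $f(S')$: the product $e_{N-1}(\beta)\,e_{N-1}(\overbar{\beta}^{-1})$ is symmetric under the swap $\beta_j\leftrightarrow\overbar{\beta}_j^{-1}$ at a single index in the sense that both choices produce the \emph{same} leading $\gamma_j$-coefficient once the remaining roots are specialized (e.g.\ to $1$). Concretely, with all other roots equal to $1$ the leading coefficient is $\binom{N^2-2}{N-2}\binom{N^2-2}{N-1}$ in either case. So a one-index probe cannot witness $f(S)\not\equiv f(S')$. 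The paper closes exactly this gap (Proposition~\ref{prop:zero}) with a two-index specialization: pick $i_1\notin\Gamma$ and $i_2\in\Gamma$, set $\tilde\gamma_{i_1}=\alpha$, $\tilde\gamma_{i_2}=\alpha^{-1}$, all other $\tilde\gamma_i=1$, and compute the leading $\alpha^2$-term explicitly. With $k_i=\binom{N^2-i}{N-i}$ one gets $f_{\mathbf{y}_1}-f_{\mathbf{y}}=(k_2^2-k_1k_3)\alpha^2+O(\alpha)$, and the combinatorial identity $k_1k_3=\frac{(N^2-1)(N-2)}{(N-1)(N^2-2)}\,k_2^2\neq k_2^2$ shows this is nonzero. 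After that, the measure-zero conclusion follows just as you outlined, by restricting to the real slices indexed by the number of complex-conjugate root pairs and invoking the standard fact that the zero set of a nonzero real-analytic function is null.
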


Theorem~\ref{thm:2d1d} implies that the problem of recovering a matrix $\bbx$ from the magnitude of its 2D Fourier transform can be recast as a  1D phase retrieval problem on the vector $\textrm{vec}\left(\mathbf{X}^T\right)$ with additional constraints. Theorem~\ref{thm:ambig} proves that it is sufficient to choose the constraint $R(N-1,1-N)$ to resolve the non trivial ambiguities in the 1D problem on $\textrm{vec}\left(\mathbf{X}^T\right)$. These insights can be used to retrieve a 2D signal from its Fourier magnitude using the following four steps:
\begin{enumerate}
\item Extract the 1D autocorrelation $\bbr=\bx\star \bx$ of $\bx=\textrm{vec}\left(\mathbf{X}^T\right)$ from the 2D autocorrelation $\mathbf{R}$ of the matrix $\mathbf{X}$ using (\ref{eq:rviaR}).
\item  Construct the associated polynomial (\ref{eq:as_1}) of $\mathbf{r}$.  Using the factorization (\ref{eq:as_2}) and (\ref{eq:as_3}) find all non-trivially different vectors with autocorrelation $\mathbf{r}$.
\item Search for the vector that fulfills the constraint $R(N-1,1-N)=x[N-1]x[N^2-n]$. Theorem~\ref{thm:ambig} suggests that almost always there will be exactly one such vector.
\item Reshape the vector back into a matrix. This matrix is the unique solution to the 2D phase retrieval problem considered.
\end{enumerate}
Note that the steps above have combinatorial complexity and therefore are practical primarily for small system sizes. However, they provide an analytic approach to solving 2D phase retrieval which can be of interest.

\subsection{Proof of Theorem~\ref{thm:ambig}}

Our derivations below rely on the results of \cite{beinert2015ambiguities} and \cite{beinert2016enforcing}.

In Theorem~\ref{thm:2d1d} we showed that the 2D phase retrieval problem with respect to $\bbx$ is equivalent to 1D phase retrieval of $\mathbf{x}$ with $(N-1)^2$ additional constraints. One of these constraints is on the value of  $x[N-1]x[N^2-N]$ which must be equal to $R(N-1,1-N)$. This constraint follows from substituting $i=N-1,j=1-N$ and $\ell=(N-1)^2$ into the expression for $R(i,j)$ in the theorem. For this choice, $m=0$ and $n=N-1$ so that the sum reduces to the single element $x[N-1]x[N^2-N]$.
We now show that, for almost all signals, there is only one trivially non-different vector that fulfills this constraint.

We begin by considering the 1D problem $\bbr=\bx \star \bx$ where $\bx$ has length $N^2-1$. We define the $\emph{autocorrelation vector}$ of $\mathbf{r}$ as the vector of zeros  $\bm{\gamma}=\left[\gamma_1,...,\gamma_{N^2-1}\right]$. Any autocorrelation sequence can be uniquely defined by its autocorrelation vector up to scale. In Section~\ref{sec:pr1} we have seen that knowing $\bbr$ is equivalent to knowing the polynomial $P_\bbr(z)$ of (\ref{eq:as_2}) where now there are $N^2-1$ pairs of zeros.
  In addition, we assume that there are $2^{N^2-2}$ distinct possible ways to factor this polynomial as in (\ref{eq:as_3}) such that the corresponding vectors are not trivially related. Denote by $A=\left\{\mathbf{y}_1,...,\mathbf{y}_{L}\right\}$ with $L=2^{N^2-2}$ the set of all real vectors $\mathbf{y}_i\in\mathbb{R}^{N^2}$ with autocorrelation $\mathbf{r}=\mathbf{x}\star\mathbf{x}$ that are non-trivially different, and let $\mathbf{y}\in A$ be a vector with zero set $B=\left\{\beta_1,...,\beta_{N^2-1}\right\}$ and associated polynomial
\begin{align}
&\hspace*{-0.2in} P_{\mathbf{y}}(z)=y[0]+y[1]z+...+y[N^2-1]z^{N^2-1} \nonumber \\
&=\left[r[N^2-1]\prod_{j=1}^{N^2-1}\left|\beta_{j}\right|^{-1}\right]^{\frac{1}{2}}\prod_{j=1}^{N^2-1}(z-\beta_{j}).
\label{eq:1}
\end{align}

We next rely on Vieta's formula \cite{hazewinkel2001viete} which relates a polynomial's coefficients to sums and products of its roots. In what follows we assume that all terms are given up to a global phase factor. Vieta's result then implies that
\begin{equation}
\frac{y[N^2-1-k]}{y[N^2-1]}=\left(\sum_{1\leq i_1\leq i_2 \leq ...\leq i_k \leq N^2-1}\beta_{i_1}\beta_{i_2}...\beta_{i_k}\right).
\end{equation}
Choosing $k=N-1$ leads to
\begin{equation}
y[N^2-N]=y[N^2-1]\left(\sum_{\tilde{N}^{2}}\beta_{i_1}\beta_{i_2}...\beta_{i_{N-1}}\right), \label{eq:4}
\end{equation}where $\sum_{\tilde{N}^{2}}=\sum_{1\leq i_{1}<i_{2}<...<i_{N-1}\leq N^{2}-1}$.
Let $\mathbf{y}^{(I)}$ be the vector with associated polynomial

\begin{eqnarray}
P_{\mathbf{y}^{(I)}}(z)&=&\left[r[N^2-1]\prod_{j=1}^{N^2-1}\left|\beta_{j}\right|\right]^{\frac{1}{2}}\prod_{j=1}^{N^2-1}(z-\midpoint{\beta}^{-1}_{j})\nonumber\\ \label{eq:3}
&=& y[N^2-1]+...+y[0]z^{N^2-1}.
\end{eqnarray}
The vector $\mathbf{y}^{(I)}$ is the flipped version of the vector $\mathbf{y}$ where $y^{(I)}[i]=y[N^2-1-i]$. Using Vieta's formula we have for $y^{(I)}[N^2-N]=y[N-1]$,

\begin{equation}
y^{(I)}[N^{2}-N]\ =y^{(I)}[N^2-1]\sum_{\tilde{N}^{2}}\overbar{\beta}^{-1}_{i_{1}}...\overbar{\beta}^{-1}_{i_{N-1}}.\label{eq:5}
\end{equation} The product of (\ref{eq:4}) and (\ref{eq:5}) is given by:

\begin{align}
&y[N^2-N]y[N-1]=y[N^2-N]y^{(I)}[N^2-N]\nonumber\\
&= \left|[r[N^{2}-1]\right|\left(\sum_{\tilde{N}^{2}}\beta_{i_{1}}...\beta_{i_{N-1}}\right)\left(\sum_{\tilde{N}^{2}}\overbar{\beta}^{-1}_{i_{1}}...\overbar{\beta}^{-1}_{i_{N-1}}\right),
\end{align} where we used the fact that from \eqref{eq:1} and \eqref{eq:3},

\begin{align}
&y[N^2-1]y^{(I)}[N^2-1] =\nonumber\\
&=\left[r[N^2-1]\prod_{j=1}^{N^2-1}\left|\beta_{j}\right|\right]^{\frac{1}{2}}\left[r[N^2-1]\prod_{j=1}^{N^2-1}\left|\beta_{j}\right|^{-1}\right]^{\frac{1}{2}}\nonumber\\
&=|r[N^2-1]|.
\end{align}

Since $\beta_i\in\left\{\gamma_i,\midpoint{\gamma}^{-1}_i\right\}$, the value $y[N^2-N]y[N-1]$ defines a function $f_\mathbf{y}(\bm{\gamma})=y[N-1]y[N^2-N]$ in the autocorrelation vector $\bm{\gamma}$. Assuming $\mathbf{y}$ is real, any complex roots come in conjugate pairs. Suppose, without lose of generality, that $\mathbf{y}_1=\mathbf{x}$ and that the zero set of $\mathbf{y}_1$ is given by $B=\left\{\gamma_1,...,\gamma_{N^2-1}\right\}$. The function $f_{\mathbf{y}_1}(\bm{\gamma})$ then becomes

\begin{eqnarray}
\lefteqn{\hspace*{-0.15in} f_{\mathbf{y}_1}(\bm{\gamma})=y_1[N-1]y_1[N^2-N]}\nonumber \\
&\hspace*{-0.2in} =&\hspace*{-0.15in} \left|[r[N^{2}-1]\right|\hspace*{-0.05in}\left(\sum_{\tilde{N}^{2}}\gamma_{i_{1}}...\gamma_{i_{N-1}}\right)\hspace*{-0.06in}\left(\sum_{\tilde{N}^{2}}\overbar{\gamma}^{-1}_{i_{1}}...\overbar{\gamma}^{-1}_{i_{N-1}}\right)
\hspace*{-0.05in}.
\end{eqnarray}

Suppose that we are given a specific autocorrelation sequence $\mathbf{\gamma_0}$ for which  the signal $\mathbf{x}$ cannot be uniquely recovered up to trivial ambiguities with the additional constraint $R(N-1,-(N-1))$ alone. Then there exists a second solution of the form $\mathbf{y}\in A\setminus{\{\mathbf{y}_1\}}$
with a zero set defined by $B^{\left({\Gamma}\right)}$, where
\begin{numcases}{B^{\left({\Gamma}\right)}=}
  \overbar{\gamma}_j^{-1}, & for $ j\in \Gamma$ \nonumber\\
 \gamma_j, & otherwise\label{eq:apx1}
\end{numcases}
and $\Gamma$ is a nonzero subset of the index set $S=\left\{1,...,N^2-1\right\}$. For this choice,

\begin{equation}
F_{{\mathbf{y}_1},{\mathbf{y}_0}}(\bm{\gamma}_0)=f_{\mathbf{y}_1}(\bm{\gamma}_0)-f_{\mathbf{y}}(\bm{\gamma}_0)=0.
\end{equation}
This implies that the number of functions $F_{{\mathbf{y}_1},{\mathbf{y}}_i}(\bm{\gamma}_0)$, $1< i \leq L$ that are zero for this given autocorrelation corresponds to the number of non-trivially different solutions with the same value $R(N-1,1-N)$.

Let us now define $\Lambda$ as the set of all autocorrelations for which the function $F_{{\mathbf{y}_1},{\mathbf{y}}}(\bm{\gamma})$ is zero, i.e

\begin{equation}
\Lambda=\left\{\bm{\gamma}\in D~\biggr{|}F_{{\mathbf{y}_1,\mathbf{y}}}(\bm{\gamma})=0\right\},\label{eq:apx2}
\end{equation} where $D$ is the domain on which the function is defined (see the Appendix for more details). The following  theorem, whose proof is provided in the Appendix, shows that for almost all autocorrelations the function is not zero.

\begin{theorem}
\label{thm:zero}
The zero set $\Lambda$ of $F_{\mathbf{y}_1,\mathbf{y}}(\bm{\gamma})$ has measure zero.
\end{theorem}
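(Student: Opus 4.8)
The plan is to show that $\Lambda$ is the zero set of a nontrivial real-analytic function on the (real) parameter domain $D$, and then invoke the standard fact that the zero set of a real-analytic function that is not identically zero has Lebesgue measure zero. The real work is therefore to argue that $F_{\mathbf{y}_1,\mathbf{y}}$, viewed as a function of the free real parameters describing the autocorrelation vector $\bm\gamma$, is (i) real-analytic on $D$ and (ii) not identically zero.

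**First I would** make the parametrization precise. The zeros $\gamma_1,\dots,\gamma_{N^2-1}$ are not free complex numbers: since $\mathbf{x}$ is real they split into some number of real zeros and some number of conjugate pairs, and $D$ is the open subset of the corresponding real coordinate space (real parts and imaginary parts of the pair-representatives, plus the real zeros) on which everything is defined — in particular where no $\beta_j$ lands on the unit circle, so the factors $|\beta_j|^{-1}$ are smooth and nonzero, and where $\mathbf{y}_1$ and $\mathbf{y}$ really are non-trivially distinct. On $D$, both $f_{\mathbf{y}_1}(\bm\gamma)$ and $f_{\mathbf{y}}(\bm\gamma)$ are built from the elementary symmetric expressions $\sum_{\tilde N^2}\beta_{i_1}\cdots\beta_{i_{N-1}}$ (and their conjugate-reciprocal counterparts), multiplied by $|r[N^2-1]|=|r[N^2-1]|(\bm\gamma)$, which is itself a polynomial in the $\gamma$'s times a positive real-analytic factor coming from the $|\beta_j|$ normalization. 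Each $\beta_j$ is either $\gamma_j$ or $\overbar\gamma_j^{-1}$, and $\gamma_j$, $\overbar\gamma_j^{-1}$, $|\gamma_j|^{\pm1}$ are all real-analytic in the underlying real coordinates on $D$ (the reciprocal and the square root of $|\gamma_j|^2=a_j^2+b_j^2$ are analytic away from the bad locus we excluded). Hence $F_{\mathbf{y}_1,\mathbf{y}}=f_{\mathbf{y}_1}-f_{\mathbf{y}}$ is real-analytic on $D$, and $D$ is connected (or I would restrict to a connected component — the measure-zero conclusion is unaffected).

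**The main obstacle** is (ii): showing $F_{\mathbf{y}_1,\mathbf{y}}\not\equiv 0$. Here $\mathbf{y}$ differs from $\mathbf{y}_1=\mathbf{x}$ by flipping the zeros indexed by a nonempty $\Gamma\subseteq S$ to their conjugate reciprocals. I would argue by a limiting/degeneration trick: since identical vanishing is a closed condition, it suffices to find one point (or a curve approaching the boundary of $D$) at which $f_{\mathbf{y}_1}\ne f_{\mathbf{y}}$. A clean choice is to drive the flipped zeros $\{\gamma_j : j\in\Gamma\}$ toward $0$ (equivalently their reciprocals toward $\infty$) while keeping the others fixed; then $\beta_j=\gamma_j$ contributes a vanishing factor to $\sum_{\tilde N^2}\beta_{i_1}\cdots\beta_{i_{N-1}}$ for $\mathbf{y}_1$ but $\beta_j=\overbar\gamma_j^{-1}$ contributes a blowing-up factor for $\mathbf{y}$, so after tracking the normalization $[r[N^2-1]\prod|\beta_j|^{\mp1}]^{1/2}$ the two functions have different orders of growth (or one tends to $0$ and the other does not), giving $F\ne0$ near that boundary point and hence $F\not\equiv0$ on $D$. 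One must be a little careful that this degeneration stays inside the closure of $D$ and that the coefficient multiplying the dominant term does not itself vanish — a generic perturbation of the non-flipped zeros removes that degeneracy — and also that $\Gamma$ and its complement are handled symmetrically so that the case $|\Gamma|$ close to $N^2-1$ behaves like the small-$|\Gamma|$ case after passing to $\mathbf{y}^{(I)}$.

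**Finally**, with $F_{\mathbf{y}_1,\mathbf{y}}$ real-analytic and not identically zero on the connected domain $D$, its zero set has Lebesgue measure zero (a standard consequence of the identity theorem for real-analytic functions, e.g.\ by induction on dimension via Fubini). Since there are only finitely many choices of $\Gamma$ (equivalently, finitely many $\mathbf{y}\in A\setminus\{\mathbf{y}_1\}$), the union over all of them of the corresponding zero sets is still measure zero, and this union contains exactly the autocorrelations for which uniqueness fails. Pulling this back to the space of matrices $\mathbf{X}$ (the map from $\mathbf{X}$ to $\bm\gamma$ being real-analytic, so preimages of measure-zero sets are measure zero, away from the measure-zero locus where the $2^{N^2-2}$-count of $A$ degenerates), we conclude that recovery with the single constraint $R(N-1,-(N-1))$ succeeds for almost all $\mathbf{x}$, which is precisely Theorem~\ref{thm:ambig}.
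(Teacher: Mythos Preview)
Your overall plan matches the paper's proof: stratify the parameter domain by the number $n$ of complex-conjugate root pairs, observe that on each stratum $F_{\mathbf{y}_1,\mathbf{y}}$ is a ratio of real polynomials (hence real-analytic), invoke the standard fact that the zero set of a nontrivial real-analytic function has Lebesgue measure zero, and take the finite union over $n$. The only substantive difference is in how you establish $F_{\mathbf{y}_1,\mathbf{y}}\not\equiv 0$, which the paper isolates as a separate proposition.

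The paper's nontriviality argument is more concrete than yours: it fixes \emph{one} index $i_1\in S\setminus\Gamma$ and \emph{one} index $i_2\in\Gamma$, sets $\tilde\gamma_{i_1}=\alpha$, $\tilde\gamma_{i_2}=\alpha^{-1}$ and all other $\tilde\gamma_i=1$, and computes the $\alpha^{2}$ coefficient of $f_{\mathbf{y}_1}(\tilde{\bm\gamma})-f_{\mathbf{y}}(\tilde{\bm\gamma})$ explicitly in terms of binomial numbers $k_i=\binom{N^{2}-i}{N-i}$, obtaining $k_2^{2}-k_1k_3\neq 0$. Your degeneration (send every $\gamma_j$ with $j\in\Gamma$ to $0$) is a legitimate alternative, but your claim of ``different orders of growth'' is not correct as stated: since $f_{\mathbf y}(\bm\gamma)=|r[N^2-1]|\big(\sum\prod\beta\big)\big(\sum\prod\bar\beta^{-1}\big)$, the blow-up in one factor is exactly offset by the tameness of the other, and $f_{\mathbf{y}_1}$ and $f_{\mathbf{y}}$ have the \emph{same} leading order in your limit. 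What differs is the leading \emph{coefficient}, and your fallback---a generic choice of the non-flipped zeros forces those coefficients apart---is indeed valid and completes the argument. So your route works once this is repaired, at the price of an extra genericity step that the paper's explicit two-index computation avoids.

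One further caution: your last paragraph goes beyond Theorem~\ref{thm:zero}, and the assertion that ``preimages of measure-zero sets under real-analytic maps are measure zero'' is false without a rank hypothesis (a constant map is a counterexample). The paper does not attempt this pull-back; it phrases Theorem~\ref{thm:ambig} directly as ``for almost all vectors $\mathbf{x}$'' and leaves the passage from $\bm\gamma$-space to $\mathbf{x}$-space implicit.
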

From Theorem~\ref{thm:zero} we conclude that the probability that for a given autocorrelation (a given $\bm{\gamma}\in D$) there is at least one function $F_{{\mathbf{y}_1},{\mathbf{y}_i}}(\bm{\gamma})=0$ , $ 1< i \leq L$ is zero. This shows that for almost all signals the constraint $R(N-1,-(N-1))$ alone is sufficient to guarantee uniqueness.

Note that in our derivation we considered the specific constraint $R(N-1,-(N-1))$. We conjecture that our conclusions will hold true for other choices of the $(N-1)^2$ constraints; however, we do not pursue this here.

\subsection{An Example}

In the example in Section~\ref{sec:reform} we considered a simple $2\times2$ matrix and showed how to recast the corresponding 2D phase retrieval problem into 1D phase retrieval. The reformulated 1D problem we obtained  was:
\begin{align}& \text{\normalfont find}  ~\mathbf{x} \nonumber \\
& \text{\normalfont subject to}  \nonumber \\
& \hspace*{.25in} \bx \star \bx=[-24,242,-867,1334,-867,242,-24] \nonumber \\
 & \hspace*{.25in} x[1]x[2]=-234.
 \label{eq:algo_09}
\end{align}

The solution to this problem without the additional constraint  $x[1]x[2]=-234$ can be found by considering the associated polynomial of the autocorrelation sequence ($\ref{eq:as_1}$) and its factorization ($\ref{eq:as_2}$):
\begin{align}
&P_\bbr(z)=\sum_{n=0}^{6}r[n-N^2+1]z^n \nonumber \\
&=-24+242z-633z^2+1334z^3-633z^4+242z^5-24z^6 \nonumber \\
&=-24(z-2)(z-3)(z-4)(z-1/2)(z-1/3)(z-1/4).
\end{align}
The autocorrelation vector of $\mathbf{r}$ is given by $\bm{\gamma}_0=\left[2,3,4\right]$.

There are $2^{4-1}=8$ different vectors with autocorrelation $\mathbf{r}$ (up to global phase and a shift) which can be found by considering all associated polynomials with zero sets $B=\{\beta_1,\beta_2,\beta_3\}$ where $\beta_1\in\{2,1/2\},~\beta_2\in\{3,1/3\},~\beta_3\in\{4,1/4\}$. Using ($\ref{eq:as_3}$), the associated polynomials are given by:
\begin{align}
&P_{\mathbf{y}_1}(z)=-24+26z^2-9z+z^3 \nonumber \\
&P_{\mathbf{y}_2}(z)=-6+29z-21z^2+4z^3\nonumber\\
&P_{\mathbf{y}_3}(z)=-8+30z-19z^2+3z^3\nonumber\\
&P_{\mathbf{y}_4}(z)=-2+15z-31z^2+12z^3\nonumber\\
&P_{\mathbf{y}_5}(z)=-12+31z-15z^2+2z^3\nonumber\\
&P_{\mathbf{y}_6}(z)=-3+19z-30z^2+8z^3\nonumber\\
&P_{\mathbf{y}_7}(z)=-4+21z-29z^2+6z^3\nonumber\\
&P_{\mathbf{y}_8}(z)=-1+9z-26z^2+24z^3.
\end{align}
 Solutions $\by_{8,...,5}$ are just a flipped version of  $\by_{1,...,4}$ (times -1) and are therefore  trivially different. We conclude that the phase retrieval problem  has $2^{4-2}=4$ non-trivially different solutions given by
\begin{align}
\left(\begin{array}{c}
\mathbf{y}_{1}\\
\mathbf{y}_{2}\\
\mathbf{y}_{3}\\
\mathbf{y}_{4}\\
\end{array}\right)=\left(\begin{array}{cccc}
-24 & 26 & -9 & 1\\
-6 & 29 & -21 & 4\\
-8 & 30 & -19 & 3\\
-2 & 15 & -31 & 12\\
\end{array}\right).
\end{align}
The additional constraint in (\ref{eq:algo_09}) states that $x[1]x[2]=-234$. It is easy to see that only the vector $\mathbf{y}_{1}$ satisfies this constraint. Therefore, this vector uniquely solves (\ref{eq:algo_09}). By reshaping $\mathbf{y}_{1}$ into its matrix form we conclude that the unique solution to (\ref{eq:algo_09}) is
 \begin{align}
\mathbf{X} & =\left(\begin{array}{ccccccc}
  -24 & 26 \\
-9 & 1   \\
\end{array}\right).
\end{align}

In the example above the 1D phase retrieval reformulation has only one additional constraint to begin with due to the small problem size. To show the ambiguity reduction for a problem with higher dimension we consider the case in which $N=3$. The equivalent 1D problem has dimension
 $N^2=9$. Let $\mathbf{x}_0$ be a randomly generated vector with the autocorrelation sequence $\bbr_0$. We have $2^{N^2-2}=128$ non trivially different vectors $\mathbf{y}_i,~0\leq i\leq 127$ that share the autocorrelation sequence $\mathbf{r_0}$. For $N=3$ the additional constraint in Theorem 2 is given by $x[N^2-N]x[N-1]=x[6]x[2]$. In Fig.~\ref{fig:ar1} we plot the vector $\bd$ with elements $d[i]=c_1 y_i[6]y_i[2]$ arranged in ascending order where $c_1$ is defined in such a way that $d[127]=1$. In Fig.~\ref{fig:ar2} we plot the vector $\mathbf{v}$ where $v[i]=\textrm{log}\left(d[i+1]-d[i]\right)$ for $0\leq i\leq 126$.
  Clearly the entries of $\bd$ have distinct values. Therefore, prior knowledge on the value of $y_i[6]y_i[2]$ indeed guarantees uniqueness.

\begin{figure}[!tbp]
	\centering
	\begin{minipage}[b]{0.4\textwidth}
		\includegraphics[width=\textwidth]{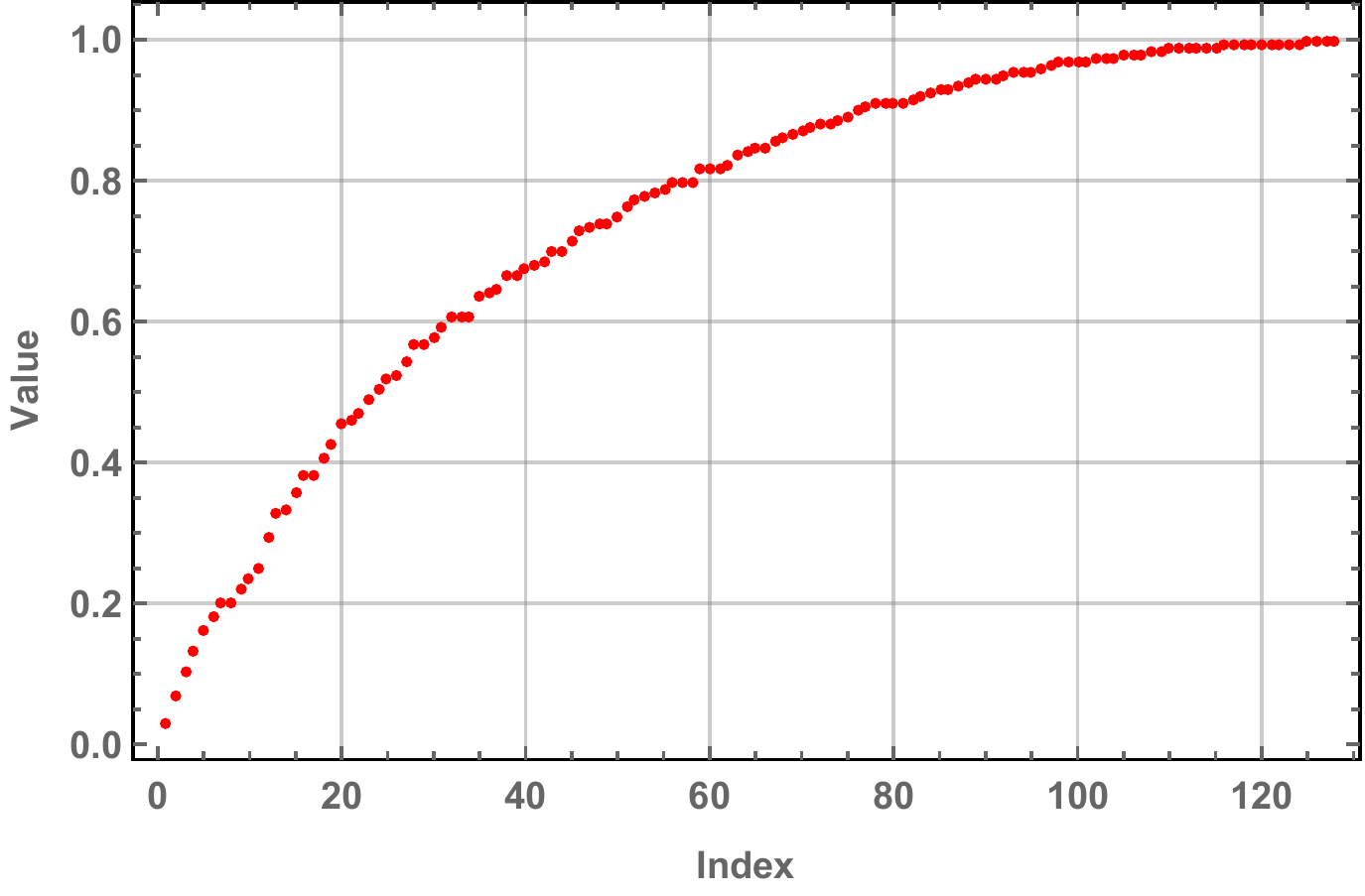}
		\caption{The values $c_1y_i[6]y_i[2]$ of all 128 non-trivially different vectors with the  autocorrelation sequence $\mathbf{r}_0$.}\label{fig:ar1}
	\end{minipage}
	\hfill
	\begin{minipage}[b]{0.4\textwidth}
		\includegraphics[width=\textwidth]{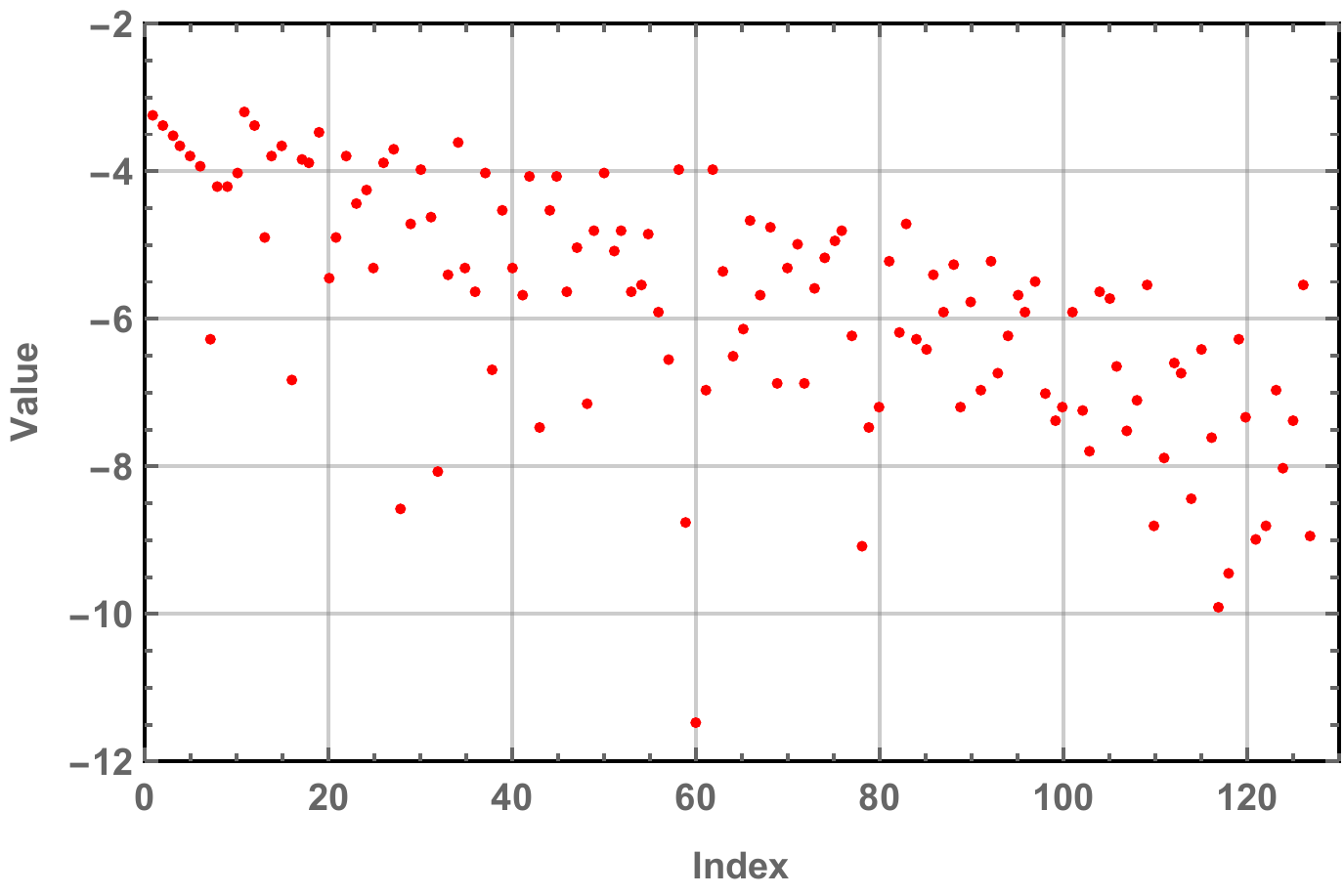}
		\caption{The vector $\mathbf{v}$. Since all values are finite the additional constraint $R(2,-2)=x[6]x[2]$ guarantees uniqueness.}\label{fig:ar2}
	\end{minipage}
\end{figure}

\section{Conclusion}
In this work we considered 2D phase retrieval and showed that it can be restated as a 1D phase retrieval problem with additional constraints. We then proved that one additional constraint is sufficient to reduce the many feasible solutions of the complementary 1D phase retrieval problem to a single solution for almost all 2D cases. This approach can be used to obtain an explicit solution for almost all 2D phase retrieval problems.

\appendix

In this appendix we prove Theorem~\ref{thm:zero}. In particular, we show that the zero set of $F_{{\mathbf{y}_1},{\mathbf{y}}}(\bm{\gamma})$ has measure zero.

As a first step, we prove the following proposition.

\begin{proposition}
\label{prop:zero}
The function $F_{{\mathbf{y}_1},{\mathbf{y}}}(\bm{\gamma})=f_{\mathbf{y}_1}(\bm{\gamma})-f_{\mathbf{y}}(\bm{\gamma})$ is not the zero function.
\end{proposition}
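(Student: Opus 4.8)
The plan is to exhibit an explicit one-parameter family of autocorrelation vectors along which $F_{\mathbf{y}_1,\mathbf{y}}$ is visibly nonzero; producing even a single point at which it does not vanish already shows it is not the zero function. First I would rewrite the two terms using elementary symmetric polynomials. Write $n:=N^2-1$ for the number of roots of $P_{\mathbf{y}}$ (equivalently, the number of zero pairs of $P_{\mathbf{r}}$) and let $e_k$ denote the $k$-th elementary symmetric polynomial. The expression for $f_{\mathbf{y}}(\bm{\gamma})$ derived just before the proposition states that, up to the common positive factor $|r[N^2-1]|$ which we may drop,
\[
F_{\mathbf{y}_1,\mathbf{y}}(\bm{\gamma})\ \propto\ e_{N-1}(\gamma_1,\dots,\gamma_n)\,e_{N-1}(\overline{\gamma}_1^{-1},\dots,\overline{\gamma}_n^{-1})\ -\ e_{N-1}(\beta_1,\dots,\beta_n)\,e_{N-1}(\overline{\beta}_1^{-1},\dots,\overline{\beta}_n^{-1}),
\]
where $\beta_i=\overline{\gamma}_i^{-1}$ for $i$ in the index set $\Gamma$ distinguishing $\mathbf{y}$ from $\mathbf{y}_1$, and $\beta_i=\gamma_i$ otherwise. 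The structural fact to record here is that $\Gamma$ is a nonempty \emph{proper} subset of $S=\{1,\dots,n\}$: nonempty because $\mathbf{y}\neq\mathbf{y}_1$, and not all of $S$ because flipping every zero produces the conjugate-reflected signal, which is trivially related to $\mathbf{y}_1$ and hence not an element of $A\setminus\{\mathbf{y}_1\}$. (If $\Gamma=\emptyset$ or $\Gamma=S$ then $F_{\mathbf{y}_1,\mathbf{y}}$ genuinely \emph{is} the zero function, so this hypothesis is essential.)

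Next I would specialize to $\gamma_i=\gamma_i(R):=R\,\delta_i$, $i=1,\dots,n$, for fixed distinct positive reals $\delta_1,\dots,\delta_n$, and let $R\to\infty$; then $\overline{\gamma}_i^{-1}=1/(R\delta_i)$, and for all sufficiently large $R$ the $2n$ zeros $R\delta_i,\,1/(R\delta_i)$ are real, distinct, nonzero and off the unit circle, so $\bm{\gamma}(R)$ lies in the domain $D$. Along this curve one tracks the degree in $R$ of each factor. Since $e_{N-1}$ is homogeneous of degree $N-1$, we have $e_{N-1}(\gamma)=R^{N-1}e_{N-1}(\delta)$ and $e_{N-1}(\overline{\gamma}^{-1})=R^{-(N-1)}e_{N-1}(1/\delta)$, so $f_{\mathbf{y}_1}$ is \emph{constant} in $R$. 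For $f_{\mathbf{y}}$, exactly $|\Gamma|$ of the $\beta_i$ are $O(R^{-1})$ while $n-|\Gamma|$ of them are $O(R)$, and exactly the $|\Gamma|$ indices in $\Gamma$ make $\overline{\beta}_i^{-1}$ of size $O(R)$; picking out in each sum the monomial of highest degree in $R$ gives $\deg_R e_{N-1}(\beta)=2\min(N-1,\,n-|\Gamma|)-(N-1)$ and $\deg_R e_{N-1}(\overline{\beta}^{-1})=2\min(N-1,\,|\Gamma|)-(N-1)$, hence $\deg_R f_{\mathbf{y}}=2\bigl[\min(N-1,|\Gamma|)+\min(N-1,n-|\Gamma|)\bigr]-2(N-1)$. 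Because $1\le|\Gamma|\le n-1$ both minima have argument at least $1$, and because $\max(|\Gamma|,n-|\Gamma|)\ge n/2=(N^2-1)/2\ge N-1$ one of the two minima equals $N-1$; therefore the bracket is at least $N$ and $\deg_R f_{\mathbf{y}}\ge2$. Since the two $R$-degrees differ ($0$ against $\ge2$), $F_{\mathbf{y}_1,\mathbf{y}}$ restricted to the curve is a Laurent polynomial in $R$ of degree $\ge2$ whose leading coefficient is $-1$ times that of $f_{\mathbf{y}}$, and the latter is a product of elementary symmetric polynomials evaluated at the positive reals $\delta_i$ and $1/\delta_i$, hence strictly positive. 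Thus $F_{\mathbf{y}_1,\mathbf{y}}$ is nonzero for all but finitely many $R$, so it cannot be the zero function.

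The main obstacle, and essentially the only nonroutine step, is the degree count just described: correctly sorting which of the $\beta_i$ and $\overline{\beta}_i^{-1}$ blow up versus vanish on the scaling curve, establishing the $\min(\cdot,\cdot)$ formula for the $R$-degree of an elementary symmetric polynomial whose arguments split into a group tending to infinity and a group tending to zero, and then confirming $\deg_R f_{\mathbf{y}}\ge2$ uniformly over all admissible $\Gamma$. Everything else is routine: that the curve stays in $D$, that taking the $\delta_i$ positive rules out any cancellation in the leading coefficient (every symmetric function involved is then a sum of products of positive numbers), and that $|r[N^2-1]|>0$ is an irrelevant common factor. A slightly more conceptual alternative sends a single zero $\gamma_1\to\infty$ and argues by induction on $n$, but that route requires separating the cases $1\in\Gamma$ and $1\notin\Gamma$ and is no cleaner, so I would present the uniform scaling argument above.
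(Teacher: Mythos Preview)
Your proof is correct. Like the paper, you exhibit a one-parameter family of real root configurations and argue by asymptotics that $F_{\mathbf{y}_1,\mathbf{y}}$ cannot vanish identically along it. The difference is in the choice of specialisation. The paper fixes indices $i_1\in S\setminus\Gamma$ and $i_2\in\Gamma$, sets $\tilde\gamma_{i_1}=\alpha$, $\tilde\gamma_{i_2}=\alpha^{-1}$ and all remaining $\tilde\gamma_i=1$, and then computes the $\alpha^2$ coefficients of $f_{\mathbf{y}_1}$ and $f_{\mathbf{y}}$ explicitly in terms of the binomials $k_i=\binom{N^2-i}{N-i}$; the conclusion rests on the numerical check $k_2^2\neq k_1k_3$. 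Your uniform scaling $\gamma_i=R\delta_i$ with distinct positive $\delta_i$ replaces that computation by a homogeneity argument: $f_{\mathbf{y}_1}$ has $R$-degree $0$ while $f_{\mathbf{y}}$ has $R$-degree at least $2$, with positive leading coefficient because every elementary symmetric function of positive reals is positive. This sidesteps the binomial identity entirely and handles all nonempty proper $\Gamma$ at once, at the modest cost of the $\min(\cdot,\cdot)$ degree bookkeeping you flagged. Both routes use the same structural hypothesis ($\emptyset\subsetneq\Gamma\subsetneq S$), which in the paper is what allows the choice of $i_1,i_2$ and in your argument is what forces the degree of $f_{\mathbf{y}}$ to be strictly positive.
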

\begin{proof}

 Let $\mathbf{y}$ be the vector set $B^{(\Gamma)}$ defined in (\ref{eq:apx1}) and $S=\left\{1,...,N^2-1\right\}$. To show that  $F_{{\mathbf{y}_1},{\mathbf{y}}}(\bm{\gamma})$ is not the zero
function it is enough to show that it is not zero at a single point. Select one index $i_1\in S \setminus {\Gamma}$ and one index $i_2 \in \Gamma$,  and consider the autocorrelation defined by $
\tilde{\gamma}_{i_1}=\alpha,\,\tilde{\gamma}_{i_2}^{-1}=\alpha$
and $\tilde{\gamma}_{i}=1$ for $i\in S\setminus{\{i_1,i_2\}}$ where we choose $\alpha\gg1$.
Using simple combinatorics it can be shown that up to a phase factor

\begin{equation}
f_{\mathbf{y}_{1}}(\tilde{\bm{\gamma}})=\alpha^{2}\left[\left(k_{2}-k_{3}\right)^{2}+\mathcal{O}(\alpha^{-1})\right],
\end{equation}
and
\begin{equation}
f_{\mathbf{y}}(\tilde{\bm{\gamma}})   =\alpha^{2}\left\{k_{3}\left[k_{1}-2\left(k_{2}-\frac{1}{2}k_{3}\right)\right]+\mathcal{O}(\alpha^{-1})\right\},
\end{equation}where $k_{i}=\left(\begin{array}{c}
N^{2}-i\\
N-i
\end{array}\right)$. Then for $\alpha \gg 1$ we have

\begin{equation}
f_{\mathbf{y}_{1}}(\tilde{\bm{\gamma}})-f_{\mathbf{y}}(\tilde{\bm{\gamma}})=\left(k_2^2-k_3k_1\right)\alpha^2 \label{eq:apdx_1}.
\end{equation} Using the fact that $k_{i}=\frac{N^{2}-i}{N-i}k_{i+1}$ leads to

\begin{equation}
k_{1}k_{3}  =\left(\frac{N^{2}-1}{N-1}\right)\left(\frac{N-2}{N^{2}-2}\right)k_{2}^{2},
\end{equation} 
and (\ref{eq:apdx_1}) becomes
\begin{equation}
f_{\mathbf{y}_{1}}(\tilde{\bm{\gamma}})-f_{\mathbf{y}}(\tilde{\bm{\gamma}})=k_2^2\left[1-\left(\frac{N^{2}-1}{N-1}\right)\left(\frac{N-2}{N^{2}-2}\right)\right]\alpha^2.
\end{equation} 
 Since $\left(\frac{N^{2}-1}{N-1}\right)\left(\frac{N-2}{N^{2}-2}\right) \ne 1$ the above expression is not zero and therefore the function $F_{{\mathbf{y}_1},{\mathbf{y}}}(\bm{\gamma})$ is not identically zero.
\end{proof}

We now analyze the function $f_{\mathbf{y}}(\bm{\gamma})$.
 The domain and the image of this function are defined such that $f_{\mathbf{y}}: \mathbb{C}^{N^2-1}\setminus {G}\rightarrow\mathbb{C}$ where $G$ is the set for which $\gamma_i=0,~i\in \{1,...,N^2-1\}$. If we instead define $\bm{\gamma}=\left[\Re{\gamma_1},\Im{\gamma_1},...,\Re{\gamma_{N^2-1}},\Im{\gamma_{N^2-1}}\right]$ then $f_{\mathbf{y}}: \mathbb{R}^{2(N^2-1)}\setminus {G}\rightarrow\mathbb{C}$.

We restrict ourselves to real vectors so that any complex roots $\gamma_i$  appear in conjugate pairs. Thus, we may define a set of functions, that represent the function $f_\mathbf{y}(\bm{\gamma})$, that are defined as $f_{\mathbf{y},n}: \mathbb{R}^{N^2-1}\setminus {G}\rightarrow\mathbb{R}$ with $0\leq n\leq \lfloor{(N^2-1)/2}\rfloor$ where $n$ denotes the number of pairs of the roots that are not real valued. To understand how, we provide a simple example. Consider the case $N=2$. The function $f_{\mathbf{y}_1}(\bm{\gamma})$ is then given by (up to a global phase factor):
\begin{equation}
f_{\mathbf{y}_1}(\bm{\gamma})=r[3]\left(\gamma_1+\gamma_2+\gamma_3\right)\left(\overbar{\gamma}_1^{-1}+\overbar{\gamma}_2^{-1}+\overbar{\gamma}_3^{-1}\right).
\end{equation} In this case the number of pairs of complex roots can either be $n=0$ or $n=1$. If $n=1$ then we must have $\gamma_2=\overbar{\gamma}_1$. Expressing each root in terms of its real and complex part we have $\gamma_j=a_j+i b_j,~1\leq j\leq 3$ and the vector of variables becomes $\mathbf{d}=[a_1,b_1,a_3]$. The function takes the form
\begin{equation}
f_{\mathbf{y}_1,1}(\mathbf{d})=r[3]\frac{2 a_1 b_1^2+a_3 b_1^2+2 a_1^3+5 a_3 a_1^2+2 a_3^2 a_1}{a_3 \left(a_1^2+b_1^2\right)}.
\end{equation} On the other hand if all the roots are real we have $b_j=0,~1\leq j\leq 3$ and the requirement that $\gamma_2=\overbar{\gamma}_1$ no longer needs to be fulfilled. The vector $\bm{\gamma}$ becomes $\mathbf{d}=[a_1,a_2,a_3]$ and the function takes the form
\begin{equation}
f_{\mathbf{y}_1,0}(\mathbf{d})=r[3]\frac{\left(a_1+a_2+a_3\right) \left(a_2 a_3+a_1 \left(a_2+a_3\right)\right)}{a_1 a_2 a_3}.
\end{equation} 
 For this case the function $f_{\mathbf{y}_1}(\bm{\gamma})$ which is defined on $\mathbb{C}^{6}\setminus G$ can be fully represented by the two functions $f_{\mathbf{y}_1,0}(\mathbf{d})$ and $f_{\mathbf{y}_1,1}(\mathbf{d})$ that are defined on $\mathbb{R}^3\setminus{G}$.

For our problem the function $F^{(n)}_{\mathbf{y1},\mathbf{y}}(\mathbf{d})\coloneqq f_{\mathbf{y}_1,n}(\mathbf{d})-f_{\mathbf{y},n}(\mathbf{d})$ where $\mathbf{d}\in \mathbb{R}^{N^2-1}$ can be written as
\begin{equation}
F^{(n)}_{{\mathbf{y}_1,\mathbf{y}}}(\mathbf{d})=\frac{g_1(\mathbf{d})}{g_2(\mathbf{d})}
\end{equation} where $g_1(\mathbf{d})$ and $g_2(\mathbf{d})$ are some multivariate polynomials with real coefficients in $\mathbf{d}$.
The zero set $V_n$ of $F^{(n)}_{{\mathbf{y}_1,\mathbf{y}}}(\mathbf{d})$ is defined as $V_n=\left\{\mathbf{d}\in \mathbb{R}^{N^2-1}\setminus {G}~\biggr{|}F^{(n)}_{{\mathbf{y}_1,\mathbf{y}}}(\mathbf{d})=0\right\}$.
The zero set of $g_1(\mathbf{d})$ is given by $W=\left\{\mathbf{v}\in \mathbb{R}^{N^2-1}~\biggr{|} g_{1}(\mathbf{v})=0\right\}$. Since $g_1(\mathbf{d})$ is a real analytic function, which is not the zero function (as shown in Proposition~\ref{prop:zero}), and is defined on an open and connected set, the equation $g_1(\mathbf{d})=0$ defines a hyperplane on the real space $\mathbb{R}^{N^2-1}$ and its zero set is therefore of measure zero. For a rigorous proof the reader is referred to \cite[corollary 10, p.9]{gunning2009analytic}  for complex valued analytical functions and \cite{2015arXiv151207276M} for real valued analytical functions. The function $F^{(n)}_{{\mathbf{y}_1,\mathbf{y}}}(\mathbf{d})$ is zero if and only if $g_1(\mathbf{d})$ is zero so that $V_n\subseteq W$. Since $W$ is a set of measure zero, so is $V_n$. The zero set $\Lambda$ of the function $F_{\mathbf{y}_1,\mathbf{y}}(\bm{\gamma})$ is given by the union of all zero sets $V_n$,
\begin{equation}
\Lambda={\bigcup_{i= 1}^{\lfloor{(N^2-1)/2}\rfloor}}V_i.
\end{equation} A countable union of sets of measure zero is still measure zero and thus $\Lambda$ defines a set of measure zero over the domain of $F_{\mathbf{y}_1,\mathbf{y}}(\bm\gamma)$.

\bibliographystyle{IEEEtran}
\bibliography{refdani}

\end{document}